\documentclass[5p,times,twocolumn]{elsarticle}

\usepackage{amsmath,amssymb,amsthm,mathtools}
\usepackage{booktabs,tabularx,array}
\usepackage{microtype}
\usepackage[T1]{fontenc}
\usepackage{hyperref}
\hypersetup{
  colorlinks=true,
  linkcolor=blue,
  citecolor=blue,
  urlcolor=blue,
  pdftitle={Quota and population monotonicity across house sizes are incompatible for apportionment to four states},
  pdfauthor={Lav R. Varshney}
}

\newtheorem{theorem}{Theorem}
\newtheorem{lemma}{Lemma}
\newtheorem{corollary}{Corollary}
\theoremstyle{definition}
\newtheorem{definition}{Definition}

\newcommand{\1}{\mathbf{1}}
\newcommand{\Znn}{\mathbb{Z}_{\ge 0}}
\newcommand{\Zpos}{\mathbb{Z}_{>0}}
\newcommand{\alloc}[1]{\texttt{#1}}

\journal{}

\begin{document}

\begin{frontmatter}

\title{Quota and population monotonicity across house sizes are incompatible for apportionment to four states}

\author[sbu]{Lav R. Varshney\corref{cor1}}
\ead{lav.varshney@stonybrook.edu}
\cortext[cor1]{Corresponding author.}
\affiliation[sbu]{organization={AI Innovation Institute and Department of Electrical and Computer Engineering, Stony Brook University},
                  addressline={Stony Brook},
                  city={New York},
                  postcode={11794},
                  country={USA}}

\begin{abstract}
Apportionment converts fractional entitlements into integer seat allocations. Quota requires each state to receive the floor or ceiling of its standard quota, whereas population monotonicity prohibits a state whose population weakly increases from losing a seat to a state whose population weakly decreases. G\"olz, Peters, and Procaccia recently removed the order-preservation assumptions used in classical incompatibility results by giving a five-state construction; together with the three-state Webster possibility result, this left four states as the unresolved boundary. We prove that no deterministic four-state apportionment solution satisfies both axioms under their definition, which permits the compared house sizes to differ. The proof is a finite logical gadget. Conditional on one quota choice at a central profile, twelve auxiliary profiles encode three bits and force a frustrated cycle. The argument uses only transfers between states whose populations are unchanged and assumes neither anonymity, neutrality, order preservation, homogeneity, nor other regularity conditions. It also applies to relative population monotonicity. Geometrically, the result is a global compatibility obstruction for quota-constrained lattice rounding, rather than an average-distortion bound.
\end{abstract}

\begin{keyword}
apportionment \sep population monotonicity \sep quota \sep impossibility theorem \sep integer rounding \sep constrained quantization
\MSC[2020] 91B12 \sep 91B14 \sep 90C10
\end{keyword}

\end{frontmatter}

\section{Introduction}

Apportionment is the institutional problem of converting proportional claims into indivisible representation. Its canonical political application is the allocation of seats in the U.S. House of Representatives among the states after each decennial census, but the same mathematics appears in party-list representation and in the allocation of committee positions or other indivisible public resources \cite{balinski2001fair,pukelsheim2017proportional,robinson2010mathematical}. The problem is not merely numerical. A single seat can alter a state's political influence, and the choice of rule has repeatedly been entangled with partisan incentives, litigation, and disputes over census counts and house size \cite{golz2025lottery}.

Hamilton's largest-remainder rule honors quota: if state $i$ has fractional entitlement $q_i$, it receives either $\lfloor q_i\rfloor$ or $\lceil q_i\rceil$ seats. Yet Hamilton's rule exhibits the Alabama paradox, in which increasing the house can cost a state a seat, and the population paradox, in which representation can move from a faster-growing state to a slower-growing state. Congress adopted the Huntington--Hill divisor method in 1941 in part because it avoids these monotonicity failures, but divisor methods may violate quota. The classical theory therefore exposes a basic political design tension between transparent local proportionality and consistency under changing populations and capacities \cite{balinski1975quota,balinski1979quotatone,balinski1980webster,balinski2001fair}.

The precise scope of the impossibility matters. Classical four-state arguments going back to Balinski--Young use regularity assumptions, usually including the order-preserving requirement that a more populous state not receive fewer seats than a less populous state. Such assumptions are natural for a single deterministic rule but are inappropriate when a randomized method is viewed as a distribution over globally defined deterministic solutions: an individual realization may intentionally favor a smaller state so that the random method is proportionally fair in expectation. G\"olz, Peters, and Procaccia \cite{golz2025lottery} therefore asked whether quota and population monotonicity can coexist without order preservation or neutrality. They settled the general incompatibility by a construction with five states, while the Webster method is population monotone and satisfies quota for three states \cite{balinski1980webster}. Thus four states remained the sharp unresolved boundary.

This note closes that boundary for the definition of population monotonicity used in \cite{golz2025lottery}, under which the two compared house sizes may differ. The proof has three useful features.
\begin{enumerate}
\item It dispenses with anonymity, neutrality, order preservation, homogeneity, divisor structure, and continuity or other regularity assumptions.
\item It replaces the extra combinatorial slack supplied by a fifth state with synchronization across profiles: quota choices are encoded as bits, copied across house sizes, and linked by six implication gadgets.
\item Every implication follows from one elementary prohibition---a seat cannot be transferred between two states whose populations are unchanged. Consequently, the same construction also violates relative population monotonicity. The final contradiction is the frustrated Boolean cycle
\[
A=B,\qquad B=C,\qquad C=1-A.
\]
\end{enumerate}
The proof uses house sizes $3$, $5$, $8$, and $10$; it therefore does not settle the distinct fixed-house version in which population monotonicity is imposed only when the two house sizes are equal.

\section{Model and a transfer lemma}

Let $N=\{1,2,3,4\}$. An input is a pair $x=(p,h)$, where $p=(p_1,p_2,p_3,p_4)\in\Zpos^4$ is a population vector and $h\in\Zpos$ is a house size. An allocation is $a\in\Znn^4$ with $\sum_i a_i=h$. A deterministic \emph{apportionment solution} is a function
\[
f:\Zpos^4\times\Zpos\longrightarrow\Znn^4,
\qquad \sum_{i=1}^4 f_i(p,h)=h.
\]
Writing $P=\sum_i p_i$, the standard quota of state $i$ is
\[
q_i(p,h)=\frac{hp_i}{P}.
\]
The solution satisfies \emph{quota} if
\begin{equation}
 f_i(p,h)\in\{\lfloor q_i(p,h)\rfloor,\lceil q_i(p,h)\rceil\}
 \quad\text{for every }p,h,i.
 \label{eq:quota}
\end{equation}

Following Robinson and Ullman \cite{robinson2010mathematical} and G\"olz et al.\ \cite{golz2025lottery}, two inputs $(p,h)$ and $(p',h')$ and two distinct states $i,j$ exhibit a \emph{population paradox} under $f$ if
\begin{equation}
\begin{aligned}
 p_i'&\ge p_i, & f_i(p',h')&<f_i(p,h),\\
 p_j'&\le p_j, & f_j(p',h')&>f_j(p,h).
\end{aligned}
\label{eq:population-paradox}
\end{equation}

\begin{definition}[Population monotonicity across house sizes]
A solution is population monotone if no comparison satisfying \eqref{eq:population-paradox} exists for any $p,p',h,h',i,j$. In particular, $h$ and $h'$ need not coincide.
\end{definition}

Setting $p'=p$ shows that this axiom also implies house monotonicity. A stronger, relative version declares a paradox whenever state $i$ loses a seat to state $j$ despite
\[
\frac{p_i'}{p_i}\ge \frac{p_j'}{p_j};
\]
that is, $i$ has grown at least as fast as $j$. A solution with no such comparison is \emph{relatively population monotone} \cite{robinson2010mathematical,golz2025lottery}.

For compactness, write $a_1a_2a_3a_4$ for the allocation $(a_1,a_2,a_3,a_4)$. We write $i\to j$ for a comparison in which state $i$ loses one or more seats, state $j$ gains one or more seats, and both populations are unchanged.

\begin{lemma}[Unchanged-state transfer]
Suppose $p_i'=p_i$ and $p_j'=p_j$. Population monotonicity forbids a comparison in which state $i$ loses a seat and state $j$ gains a seat.
\end{lemma}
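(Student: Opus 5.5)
The lemma follows directly from the definition of a population paradox in \eqref{eq:population-paradox}, so the plan is a short verification rather than a construction. We fix two inputs $(p,h)$ and $(p',h')$ with $p_i'=p_i$ and $p_j'=p_j$, and suppose for contradiction that a population monotone $f$ exhibits a comparison in which $i$ loses a seat and $j$ gains one. That is, $f_i(p',h')<f_i(p,h)$ and $f_j(p',h')>f_j(p,h)$. We then check that the four conditions in \eqref{eq:population-paradox} all hold for the ordered pair of inputs $((p,h),(p',h'))$ and the states $i,j$.

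The two seat inequalities are exactly the assumed loss and gain. For the population inequalities, $p_i'=p_i$ gives $p_i'\ge p_i$ and $p_j'=p_j$ gives $p_j'\le p_j$. Both are weak inequalities, so equality satisfies them. Hence the comparison is a population paradox. By the Definition, which quantifies over all $p,p',h,h',i,j$ and in particular allows $h\neq h'$, such a paradox cannot exist. This contradiction proves the lemma.

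We would then add two remarks, since later gadgets rely on them. First, the other two states $k\notin\{i,j\}$ play no role: their populations and seat counts may change arbitrarily, and the house sizes may differ. So the lemma applies to any comparison in the $i\to j$ notation introduced above. Second, for relative population monotonicity the ratios satisfy $p_i'/p_i=1=p_j'/p_j$, so the relative condition $p_i'/p_i\ge p_j'/p_j$ also holds and the same comparison is forbidden. There is no real obstacle here. The only point needing care is confirming that the weak inequalities in \eqref{eq:population-paradox} accept equality and that $h\neq h'$ is allowed.
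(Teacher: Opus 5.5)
Your proposal is correct and follows essentially the same approach as the paper: it checks that the equalities $p_i'=p_i$ and $p_j'=p_j$ satisfy the weak inequalities in \eqref{eq:population-paradox}, so the assumed loss and gain form a population paradox. Wrapping this in a contradiction is cosmetic, and your remarks on uninvolved states, differing house sizes, and the relative version match how the paper later uses the lemma and its corollary.
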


\begin{proof}
The loss by $i$ satisfies the first line of \eqref{eq:population-paradox}, since equality implies $p_i'\ge p_i$. The gain by $j$ satisfies the second line, since equality implies $p_j'\le p_j$. Hence the comparison is a population paradox.
\end{proof}

\section{A four-state impossibility gadget}

\begin{theorem}
Under Definition~1, which permits the compared house sizes $h$ and $h'$ to differ, no deterministic four-state apportionment solution satisfies both quota and population monotonicity.
\end{theorem}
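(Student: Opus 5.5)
We argue by contradiction: fix a solution $f$ satisfying quota and population monotonicity, and derive an inconsistency using only the unchanged-state transfer lemma (Lemma~1). The mechanism is a finite family of profiles $(p,h)$ with $h\in\{3,5,8,10\}$. Quota restricts each $f(p,h)$ to a short list of candidate allocations, typically two or three, differing only in which states receive the ceiling. Each comparison between two profiles picks out a pair of states $i,j$ whose populations coincide in both profiles. By Lemma~1, $i$ cannot lose a seat while $j$ gains one. Applied to the candidate lists, such comparisons become implications of the form ``if profile $X$ uses allocation $\alpha$, then profile $Y$ uses allocation $\beta$.''

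The first step is to fix a central profile, say with house size $5$, whose quota leaves a genuine binary choice. We split on that choice. The two cases should be symmetric under a relabeling of states, so only one needs full treatment and the other follows by the mirrored gadget.

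Within a case, we introduce three Boolean variables $A,B,C$. Each records which of two quota-admissible allocations is used at designated profiles with house sizes $3$ or $8$. Each bit is then copied across house sizes. Take two profiles that agree on the populations of two states but have different house sizes, for example $3$ and $8$, or $5$ and $10$. Any mismatch in the bits would force a transfer between those two fixed states, which Lemma~1 forbids. The copying gadgets therefore give equalities such as $A$ at one profile equals $A$ at another. With twelve auxiliary profiles, I expect roughly four per bit, used for copying together with the linking steps.

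Next come the six implication gadgets. Each is a pair of profiles designed so that, say, $A=1$ forces $B=1$ and $B=1$ forces $A=1$. Together they yield $A=B$ and $B=C$. The central quota choice enters by breaking the symmetry in the last link, forcing $C=1-A$. The cycle
\[
A=B,\qquad B=C,\qquad C=1-A
\]
has no Boolean solution, which is the contradiction. Since every forbidden comparison involves two states with unchanged populations, it is automatically also a violation of relative monotonicity.

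The main obstacle is the explicit construction of the population vectors. Three conditions must hold at once:
\begin{itemize}
\item each profile's quotas leave exactly the intended two-way ambiguity;
\item every linked pair of profiles shares exact populations for the two relevant states;
\item no unintended case survives.
\end{itemize}
My first attempt would use small integer populations with total $P$ chosen so that $hp_i/P$ is non-integral with controlled fractional parts, for example $P$ coprime to the house sizes. I would then search by hand, or by a tiny computer enumeration over populations up to about $20$, for vectors satisfying these constraints. Finally, I would verify each implication by listing the candidate allocations in $a_1a_2a_3a_4$ notation.
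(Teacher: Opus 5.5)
You have the right global architecture: contradiction from Lemma~1 alone, bits copied across house sizes, and six implications closing into $A=B$, $B=C$, $C=1-A$. But the proposal is not yet a proof. The whole content of the theorem is an explicit finite gadget, and you defer exactly that to a future search.

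\textbf{The missing bridge mechanism.} You never say how Lemma~1 can produce an implication between two bits living at different profiles. Lemma~1 only excludes \emph{pairs} of options: ``not both $X$ uses $\alpha$ and $Y$ uses $\beta$.'' The bit profiles typically cannot be compared directly. In the paper, once the bits are binary, no pair of options at $A=((1,1,1,9),8)$ and $B=((1,1,9,1),8)$ is excluded, because their shared unchanged states never trade seats.

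The idea you need is a three-option bridge, where one bit kills the target and the negation of the other bit kills both non-targets. At $R_{AB}=((2,1,6,9),10)$ the quota options are $\alloc{2035},\alloc{1135},\alloc{1045}$.
\begin{itemize}
\item If $A=1$, comparing $\alloc{1016}$ at $A$ with $\alloc{1135}$ gives a $4\to2$ transfer, so the target is forbidden.
\item If $B=0$, the house-5 copy $D_B=((2,1,6,1),5)$ uses $\alloc{1130}$ and differs from $R_{AB}$ only in state~4's population. Then $\alloc{2035}$ and $\alloc{1045}$ are $2\to1$ and $2\to3$ transfers, so both non-targets are forbidden.
\end{itemize}
So $A=1,B=0$ leaves $R_{AB}$ with no admissible allocation. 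The copy profiles exist precisely to give a bit three unchanged states against the bridge.

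\textbf{The role of the central choice.} You misplace it. In the paper it is a three-way choice at $X=((1,1,1,6),3)$, and its job is to make the bits binary in the first place. Each house-8 bit profile has three quota allocations, since two of three unit-population states get a seat. The choice $f(X)=\alloc{1002}$ excludes the one that leaves state~1 empty, via a $1\to2$ or $1\to3$ transfer. The frustration $C=1-A$ is wired into the orientation of $R_{AC}$ and $R_{CA}$; it is not supplied by $X$.

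\textbf{The search heuristic.} Taking $P$ coprime to $h$ points the wrong way. It makes every quota nonintegral, so each profile has four or six quota allocations and exclusions become much harder. The paper deliberately makes the large state's quota integral at every profile (exactly $2,6,3,5$ seats). This keeps every candidate list at length two or three.

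\textbf{The relabeling step.} Since neutrality is not assumed, relabeling is valid only if the central population vector is invariant under the permutation. The paper makes this precise by running the argument on the conjugate solution $f^{\pi}$.
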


\begin{proof}

\

\noindent\emph{The central choice and three binary variables.}

Assume for contradiction that such a solution $f$ exists. Consider the central profile
\begin{equation}
 X=((1,1,1,6),3),
 \qquad
 q(X)=\left(\frac13,\frac13,\frac13,2\right).
 \label{eq:X}
\end{equation}
Its quota allocations are $\alloc{1002}$, $\alloc{0102}$, and $\alloc{0012}$. We first analyze the branch
\begin{equation}
 f(X)=\alloc{1002}.
 \label{eq:branch}
\end{equation}
The other two choices will be handled by relabeling at the end.

Introduce the profiles in Table~\ref{tab:bits}. At each of $A,B,C$, quota initially permits three allocations. The excluded third option shown below the table is impossible relative to \eqref{eq:branch}: at $A$ and $B$ it transfers a seat from state 1 to state 2, and at $C$ it transfers a seat from state 1 to state 3; the two relevant populations remain equal to one. The lemma therefore leaves exactly two choices, which define bits $A,B,C\in\{0,1\}$.

\begin{table}[t]
\centering
\caption{Profiles that encode and copy the three Boolean variables. The two listed allocations define bit values 0 and 1.}
\label{tab:bits}
\scriptsize
\setlength{\tabcolsep}{2.6pt}
\renewcommand{\arraystretch}{1.08}
\begin{tabular}{@{}lccc@{}}
\toprule
Profile $(p;h)$ & exact quota pattern & bit 0 & bit 1\\
\midrule
$A=((1,1,1,9),8)$ & $(\frac23,\frac23,\frac23,6)$ & \alloc{1106} & \alloc{1016}\\
$B=((1,1,9,1),8)$ & $(\frac23,\frac23,6,\frac23)$ & \alloc{1160} & \alloc{1061}\\
$C=((1,9,1,1),8)$ & $(\frac23,6,\frac23,\frac23)$ & \alloc{1610} & \alloc{1601}\\
\midrule
$D_A=((2,1,1,6),5)$ & $(1,\frac12,\frac12,3)$ & \alloc{1103} & \alloc{1013}\\
$D_B=((2,1,6,1),5)$ & $(1,\frac12,3,\frac12)$ & \alloc{1130} & \alloc{1031}\\
$D_C=((2,6,1,1),5)$ & $(1,3,\frac12,\frac12)$ & \alloc{1310} & \alloc{1301}\\
\bottomrule
\end{tabular}

\vspace{2pt}
\parbox{0.96\columnwidth}{\scriptsize The excluded third options at $A,B,C$ are, respectively, \alloc{0116}, \alloc{0161}, and \alloc{0611}.}
\end{table}

The lower three profiles copy the bits to house size five. To see this for $A$ and $D_A$, compare their two listed options. The mismatch $A=0,D_A=1$ transfers a seat $2\to3$, whereas $A=1,D_A=0$ transfers a seat $3\to2$; populations of states 2 and 3 are one at both profiles. Both mismatches violate the lemma. The same argument applies to $(B,D_B)$ using states 2 and 4 and to $(C,D_C)$ using states 3 and 4. Consequently,
\begin{equation}
 D_A=A,\qquad D_B=B,\qquad D_C=C,
 \label{eq:copies}
\end{equation}
where a copy profile is assigned the option with the same bit index.

\noindent\emph{Six implication bridges.}

Each bridge profile in Table~\ref{tab:bridges} has total population 18 and house size 10. Its exact quota vector is a permutation of
\[
\left(\frac{10}{9},\frac{5}{9},\frac{10}{3},5\right),
\]
so it has exactly the three quota allocations displayed. The bold allocation is the target. For each bridge, one bit value makes the target impossible, while the negation of the desired conclusion, through the appropriate copy profile in \eqref{eq:copies}, makes both non-target allocations impossible. Hence the target would be simultaneously forbidden and forced.

\begin{table*}[t]
\centering
\caption{The six bridge gadgets. The notation $i\to j$ records a forbidden transfer between states whose populations are unchanged. In the ``non-target transfers'' column, the two transfers correspond in order to the two non-target quota allocations.}
\label{tab:bridges}
\scriptsize
\setlength{\tabcolsep}{3.6pt}
\renewcommand{\arraystretch}{1.13}
\begin{tabular}{@{}llllll@{}}
\toprule
Bridge $(p;h)$ & quota allocations & target forbidden by & target forced by & non-target transfers & implication\\
\midrule
$R_{AB}=((2,1,6,9),10)$ & \alloc{2035}, $\mathbf{1135}$, \alloc{1045} & $A=1:4\to2$ & $D_B=0$ & $2\to1,\ 2\to3$ & $A\Rightarrow B$\\
$R_{BA}=((2,1,9,6),10)$ & \alloc{2053}, $\mathbf{1153}$, \alloc{1054} & $B=1:3\to2$ & $D_A=0$ & $2\to1,\ 2\to4$ & $B\Rightarrow A$\\
$R_{AC}=((2,6,1,9),10)$ & \alloc{2305}, \alloc{1405}, $\mathbf{1315}$ & $A=0:4\to3$ & $D_C=0$ & $3\to1,\ 3\to2$ & $\neg A\Rightarrow C$\\
$R_{CA}=((2,9,1,6),10)$ & \alloc{2503}, $\mathbf{1513}$, \alloc{1504} & $C=1:2\to3$ & $D_A=1$ & $3\to1,\ 3\to4$ & $C\Rightarrow\neg A$\\
$R_{BC}=((2,6,9,1),10)$ & \alloc{2350}, \alloc{1450}, $\mathbf{1351}$ & $B=0:3\to4$ & $D_C=1$ & $4\to1,\ 4\to2$ & $\neg B\Rightarrow\neg C$\\
$R_{CB}=((2,9,6,1),10)$ & \alloc{2530}, \alloc{1540}, $\mathbf{1531}$ & $C=0:2\to4$ & $D_B=1$ & $4\to1,\ 4\to3$ & $\neg C\Rightarrow\neg B$\\
\bottomrule
\end{tabular}
\end{table*}

The bridge logic is therefore the six-clause system
\[
\begin{gathered}
A\Rightarrow B,\qquad B\Rightarrow A,\\
\neg A\Rightarrow C,\qquad C\Rightarrow\neg A,\\
\neg B\Rightarrow\neg C,\qquad \neg C\Rightarrow\neg B.
\end{gathered}
\]
The first pair gives $A=B$, the middle pair gives $A\ne C$, and the last pair gives $B=C$. For completeness, we verify every row directly.

\emph{The pair $A,B$.} At $R_{AB}$, if $A=1$, then comparing \alloc{1016} at $A$ with the target \alloc{1135} transfers a seat $4\to2$ while populations $p_4=9$ and $p_2=1$ are unchanged; the target is forbidden. If $B=0$, then $D_B=0$ gives \alloc{1130}. From $D_B$ to $R_{AB}$ only state 4's population changes. The non-target \alloc{2035} transfers $2\to1$, and \alloc{1045} transfers $2\to3$, so both are forbidden. Thus $B=0$ forces the target, proving $A=1\Rightarrow B=1$.

At $R_{BA}$, $B=1$ gives \alloc{1061} at $B$, and the target \alloc{1153} transfers $3\to2$ between unchanged populations, so it is forbidden. If $A=0$, then $D_A=0$ gives \alloc{1103}. Only state 3's population changes on moving to $R_{BA}$; the non-targets \alloc{2053} and \alloc{1054} transfer $2\to1$ and $2\to4$, respectively. Hence $A=0$ forces the target and $B=1\Rightarrow A=1$. Together,
\begin{equation}
 A=B.
 \label{eq:AB}
\end{equation}

\emph{The pair $A,C$.} At $R_{AC}$, $A=0$ gives \alloc{1106} at $A$; the target \alloc{1315} transfers $4\to3$ while populations $p_4=9$ and $p_3=1$ are unchanged, so the target is forbidden. If $C=0$, then $D_C=0$ gives \alloc{1310}. Only state 4's population changes on moving to $R_{AC}$; the non-targets \alloc{2305} and \alloc{1405} transfer $3\to1$ and $3\to2$. Thus $C=0$ forces the target and $A=0\Rightarrow C=1$.

At $R_{CA}$, $C=1$ gives \alloc{1601} at $C$; the target \alloc{1513} transfers $2\to3$ between unchanged populations, so it is forbidden. If $A=1$, then $D_A=1$ gives \alloc{1013}. Only state 2's population changes on moving to $R_{CA}$; the non-targets \alloc{2503} and \alloc{1504} transfer $3\to1$ and $3\to4$. Hence $A=1$ forces the target and $C=1\Rightarrow A=0$. The two implications are equivalent to
\begin{equation}
 C=1-A.
 \label{eq:AC}
\end{equation}

\emph{The pair $B,C$.} At $R_{BC}$, $B=0$ gives \alloc{1160} at $B$; the target \alloc{1351} transfers $3\to4$ between unchanged populations, so it is forbidden. If $C=1$, then $D_C=1$ gives \alloc{1301}. Only state 3's population changes on moving to $R_{BC}$; the non-targets \alloc{2350} and \alloc{1450} transfer $4\to1$ and $4\to2$. Thus $C=1$ forces the target and $B=0\Rightarrow C=0$.

At $R_{CB}$, $C=0$ gives \alloc{1610} at $C$; the target \alloc{1531} transfers $2\to4$ between unchanged populations, so it is forbidden. If $B=1$, then $D_B=1$ gives \alloc{1031}. Only state 2's population changes on moving to $R_{CB}$; the non-targets \alloc{2530} and \alloc{1540} transfer $4\to1$ and $4\to3$. Hence $B=1$ forces the target and $C=0\Rightarrow B=0$. Therefore
\begin{equation}
 B=C.
 \label{eq:BC}
\end{equation}

\noindent\emph{Contradiction and removal of the conditional branch.}

Equations \eqref{eq:AB}--\eqref{eq:BC} imply
\[
 A=B=C=1-A,
\]
which is impossible for Boolean variables. Therefore no solution satisfying both axioms can obey \eqref{eq:branch}.

It remains to avoid any hidden neutrality assumption. At $X$, quota gives the marginal seat to some $r\in\{1,2,3\}$. Choose a coordinate permutation $\pi$ that fixes state 4 and satisfies $\pi(r)=1$. For a vector $x\in\mathbb{R}^4$, define the action by
\[
 (\pi x)_i=x_{\pi^{-1}(i)}.
\]
Now define the conjugate solution
\begin{equation}
 f^{\pi}(p,h)=\pi\bigl(f(\pi^{-1}p,h)\bigr).
 \label{eq:conjugate}
\end{equation}
Quota is invariant under coordinate permutation. The inequalities and seat comparisons in \eqref{eq:population-paradox} are also merely relabeled, so $f^{\pi}$ is population monotone whenever $f$ is. The population vector of $X$ is invariant under permutations of its first three coordinates, whereas the chosen marginal winner is moved to state 1. Hence $f^{\pi}(X)=\alloc{1002}$, contradicting the conditional argument. This proves the theorem.
\end{proof}

\begin{corollary}
No deterministic four-state apportionment solution satisfies both quota and relative population monotonicity across house sizes.
\end{corollary}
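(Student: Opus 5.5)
The corollary follows by reducing to the Theorem, so I will not build a new gadget. The key observation is that relative population monotonicity is a stronger axiom than population monotonicity, at least on the comparisons the Theorem's proof uses. Assume for contradiction that some deterministic four-state solution $f$ satisfies quota and relative population monotonicity across house sizes. I will show that every comparison invoked in the proof of the Theorem is also a relative population paradox. Then the same chain of deductions yields $A=B=C=1-A$, which is a contradiction.

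First I check that the Unchanged-state transfer lemma still holds under the relative axiom. If $p_i'=p_i$ and $p_j'=p_j$, then $p_i'/p_i=1=p_j'/p_j$, so the condition $p_i'/p_i\ge p_j'/p_j$ holds with equality. A comparison in which $i$ loses a seat and $j$ gains one is therefore a relative population paradox. The proof of the Theorem appeals to population monotonicity only through this lemma. That covers the exclusion of the third option at $A,B,C$, the copying $D_A=A$, $D_B=B$, $D_C=C$, and each forbidden or forced target in the six bridges. In every case the two states involved have unchanged populations, as the proof records explicitly (for example, ``only state 4's population changes''). So each step goes through verbatim, and the Boolean cycle is again frustrated.

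Second, the reduction from the branch $f(X)=\alloc{1002}$ to the general case needs the conjugate $f^{\pi}$ in \eqref{eq:conjugate} to preserve relative population monotonicity. This holds because the ratio condition, like the inequalities in \eqref{eq:population-paradox}, is simply relabeled by a coordinate permutation. Quota is likewise permutation-invariant. Hence $f^{\pi}(X)=\alloc{1002}$ while $f^{\pi}$ satisfies both axioms, which contradicts the conditional argument.

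An even shorter route would be to show that relative population monotonicity implies population monotonicity in general. Indeed, $p_i'\ge p_i$ and $p_j'\le p_j$ give $p_i'/p_i\ge 1\ge p_j'/p_j$, so every population paradox is a relative one. The corollary then follows immediately from the Theorem. I expect no real obstacle. The only point to state carefully is this implication between the two paradox definitions, namely that the ratio condition is weaker than the pair of inequalities, so that the relative axiom forbids more comparisons.
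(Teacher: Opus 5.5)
Your proposal is correct, and your main argument is the paper's own: every forbidden transfer in the theorem's proof has $p_i'=p_i$ and $p_j'=p_j$, so $p_i'/p_i=p_j'/p_j=1$ and each witness is also a relative paradox. The shorter route you note at the end, namely that every population paradox is a relative one because $p_i'/p_i\ge 1\ge p_j'/p_j$, is also valid and gives the corollary directly from the theorem's statement, without reinspecting the gadget or the conjugation step.
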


\begin{proof}
Every forbidden transfer in the proof has $p_i'=p_i$ and $p_j'=p_j$, hence $p_i'/p_i=p_j'/p_j=1$. Each witness is therefore also a relative population paradox.
\end{proof}

\begin{corollary}
There is no randomized apportionment method whose realized allocation path is almost surely a globally defined four-state solution satisfying both quota and population monotonicity.
\end{corollary}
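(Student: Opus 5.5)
The plan is to reduce Corollary~2 to the Theorem by examining what a single realization of such a randomized method would have to be. By hypothesis, the method is a probability distribution over deterministic realization paths, and almost surely the realized path is a globally defined function $f:\Zpos^4\times\Zpos\to\Znn^4$ with $\sum_i f_i(p,h)=h$. This function satisfies quota at every input and admits no population paradox in the sense of \eqref{eq:population-paradox}, with house sizes allowed to differ. I will fix any outcome in this full-probability event; its realization is then a deterministic four-state apportionment solution satisfying both axioms, as in Definition~1 and \eqref{eq:quota}.

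The Theorem says that no such deterministic solution exists. So the event in question is empty, and an empty event cannot have probability one. This already gives the contradiction, and the corollary follows.

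The one point that needs care is the meaning of ``globally defined.'' The realized path must be a single function on all inputs, or at least on the finitely many inputs the proof uses: $X$, $A,B,C$, $D_A,D_B,D_C$ and the six bridges $R_{\cdot\cdot}$. It must not be a separate draw for each input. This is exactly what the proof of the Theorem requires, because every forbidden transfer compares two profiles under the same $f$. I will therefore remark that the argument is entirely finite and touches only these fourteen profiles. Hence even a weaker hypothesis suffices: it is enough that, almost surely, the restriction of the realization to these fourteen profiles satisfies quota and contains no unchanged-state transfer of the kind ruled out by the Lemma (Unchanged-state transfer).

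I also want the probability claim to be fully precise. I will note that the set of realizations satisfying both axioms on this finite family is a finite union of allocation patterns, so it is measurable. The proof of the Theorem shows this set is empty: the case analysis at $X$, followed by the frustrated cycle $A=B=C=1-A$, rules out every pattern. Therefore it has probability zero rather than one, which contradicts the almost-sure hypothesis. I do not expect any real obstacle beyond stating the measurability and globality conventions cleanly.
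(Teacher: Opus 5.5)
Your main argument in the first two paragraphs is correct and is the paper's own proof. Any outcome in the almost-sure event would carry a deterministic solution with both properties. The Theorem says no such solution exists, so the event is empty and cannot have probability one. No measurability discussion is needed, since the empty set is measurable.

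The strengthening in your last two paragraphs is false, however, and you should remove it.
\begin{itemize}
\item The family is miscounted: $X$, $A,B,C$, $D_A,D_B,D_C$ and the six bridges are thirteen profiles, not fourteen.
\item More importantly, the Theorem's proof does not use only these profiles. The branches $f(X)=\alloc{0102}$ and $f(X)=\alloc{0012}$ are eliminated by running the conditional argument on $f^{\pi}$. Evaluating $f^{\pi}$ at a gadget profile means evaluating $f$ at $\pi^{-1}p$. For example, with $\pi=(1\,2)$, $f$ is evaluated at $((1,2,1,6),5)$ rather than at $D_A=((2,1,1,6),5)$.
\item Restricted to the original thirteen profiles, there is in fact no contradiction. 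The assignment
$f(X)=\alloc{0102}$, $f(A)=\alloc{1106}$, $f(B)=\alloc{0161}$, $f(C)=\alloc{1601}$, $f(D_A)=\alloc{1103}$, $f(D_B)=\alloc{1130}$, $f(D_C)=\alloc{1301}$, $f(R_{AB})=\alloc{1135}$, $f(R_{BA})=\alloc{1153}$, $f(R_{AC})=\alloc{2305}$, $f(R_{CA})=\alloc{2503}$, $f(R_{BC})=\alloc{1351}$, $f(R_{CB})=\alloc{1540}$
satisfies quota at every profile. It has no unchanged-state transfer between any two of these profiles, and in fact no population paradox among them at all.
\end{itemize}
So the ``case analysis at $X$ followed by the frustrated cycle'' does not rule out every pattern on your family. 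A correct finite version would have to add the relabeled copies $\pi^{-1}p$ for the permutations used in the branch removal. That enlarged family is still finite, but the corollary needs none of this.
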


\begin{proof}
Such a method would be a probability distribution supported on deterministic solutions having both properties, but the theorem shows that the support set is empty.
\end{proof}

\section{Interpretation as constrained quantization}

At a fixed house size, apportionment has the geometry of quota-constrained integer rounding \cite{gray1998quantization}. The fractional quota vector lies on the simplex layer
\[
 \mathcal{S}_h=\{q\in\mathbb{R}_{\ge0}^4:\1^{\mathsf T}q=h\},
\]
and the reproduction alphabet is
\[
 \mathcal{A}_h=\{a\in\Znn^4:\1^{\mathsf T}a=h\}.
\]
Differences between reproduction points lie in the root lattice
\[
 A_3=\{z\in\mathbb{Z}^4:\1^{\mathsf T}z=0\}.
\]
Thus the ambient representation uses four coordinates, but both the simplex layer and the lattice have intrinsic dimension three.

A general solution $f(p,h)$ is most precisely a quota-constrained selector: absent homogeneity, it may use information in the integer population vector $p$ beyond the quota vector $q(p,h)$. If homogeneity is additionally imposed, the selector factors through $q$ and becomes an ordinary quantizer $Q_h:\mathcal{S}_h\to\mathcal{A}_h$.

Quota is a strong local error constraint: each coordinate is rounded to an adjacent integer, so $|a_i-q_i|<1$ whenever $q_i$ is nonintegral. More precisely, write $q=\ell+r$, with $\ell_i=\lfloor q_i\rfloor$ and $r_i\in[0,1)$. Since $\sum_i q_i=h$, the integer $k=\sum_i r_i$ is the number of coordinates that must be rounded upward. The quota allocations are exactly
\begin{equation}
 a=\ell+b,
 \qquad b\in\{0,1\}^4,
 \qquad \1^{\mathsf T}b=k,
 \qquad b_i=0\ \text{if }r_i=0.
 \label{eq:hypersimplex}
\end{equation}
When $k=2$ and all residues are positive, the six feasible binary vectors are the vertices of the hypersimplex $\Delta(4,2)$ and form the octahedral Johnson graph $J(4,2)$.

This perspective also recovers a classical method. Under squared Euclidean distortion,
\begin{equation}
 \|q-(\ell+b)\|_2^2
 =\|r\|_2^2+\sum_i b_i(1-2r_i).
 \label{eq:hamilton-objective}
\end{equation}
Subject to $\sum_i b_i=k$, a minimizer rounds upward the $k$ largest residues, which is Hamilton's method; ties can produce multiple minimizers and require a tie-breaking convention. Other apportionment methods arise from separable convex integer objectives and alternative error geometries \cite{gaffke2008vector,pukelsheim2017proportional}.

The theorem is therefore a global compatibility obstruction for deterministic, exact-sum quota selectors on four-coordinate inputs. It is not a conventional lower bound on average distortion. Without homogeneity, the result is not literally about a single map $q\mapsto a$; rather, no globally defined selector on population--house inputs can make all locally admissible quota choices compatible with the relational order constraint \eqref{eq:population-paradox} across population vectors and simplex layers. Under homogeneity, this becomes a no-go theorem for a family of quantizers $Q_h$. The gadget exposes the obstruction as a failure of global gluing: transporting local labels through the copy and bridge constraints creates a frustrated cycle that returns with the bit complemented.

\section{Conclusion and applications}

The result determines the first number of states at which incompatibility appears, under population monotonicity across possibly different house sizes, without anonymity, neutrality, order preservation, homogeneity, or other regularity assumptions: three states admit the Webster solution, whereas four states do not. The fifth state in the earlier construction of  G\"olz et al.\ \cite{golz2025lottery} is not intrinsically necessary. Its pigeonhole slack can be replaced by a finite logical network that synchronizes rounding decisions across profiles and house sizes. Because every witness uses equal population ratios, the same impossibility holds for relative population monotonicity.

For institutional design, the theorem sharpens a genuine trilemma. Already with four claimants, a deterministic system cannot simultaneously guarantee exact total allocation, lower-or-upper quota for every claimant, and the absence of the population paradox in Definition~1. A designer must instead prioritize monotonicity through a divisor-type method and tolerate some quota violations; preserve quota and accept that a paradox occurs somewhere; weaken the guarantee to an approximate or domain-restricted one; or use randomization with ex-ante or stochastic, rather than realization-by-realization, monotonicity \cite{grimmett2004stochastic,golz2025lottery,cembrano2025combinatorial,correa2026monotone}. Multiobjective formulations can make such tradeoffs explicit rather than treating a structurally unavoidable failure as a defect of implementation \cite{shechter2024congressional}.

The result also rules out randomized rules whose realized global allocation path is required to satisfy quota and population monotonicity almost surely; it does not rule out ex-ante, approximate, or high-probability relaxations.

The same logic applies beyond legislative seats whenever indivisible positions or resources are allocated approximately in proportion to changing claims while total capacity may also change: party-list seats, committee and cabinet positions, recurring service slots, or organizational resources. 

Three limitations identify natural next problems. First, the proof compares house sizes $3$, $5$, $8$, and $10$; as far as we know, the four-state problem without auxiliary regularity assumptions remains open when every monotonicity comparison is restricted to a common house size. Second, the profiles permit zero-seat allocations and do not directly settle systems imposing one seat per state. Third, an impossibility theorem identifies a worst-case obstruction but not its frequency or severity under empirical population distributions. Same-house gadgets, positive-minimum constructions, and distribution-aware rounding or quantization bounds would make the theory still more directly applicable.

\section*{AI usage disclosure}
ChatGPT 5.6-Sol was used in the process of formal proofs and writing.

\bibliographystyle{elsarticle-num}
\bibliography{references}

\end{document}